\definecolor{refkeybis}{gray}{.65}
\definecolor{labelkeybis}{gray}{.65}
{\makeatletter
\def\SK@refcolor{\color{refkeybis}}%
\def\SK@labelcolor{\color{labelkeybis}}}
\numberwithin{equation}{section} 
\newtheorem{theorem}{Theorem}[section]
\newtheorem{lemma}[theorem]{Lemma}
\newcommand{\N}{\mathbb{N}}
\newcommand{\R}{\mathbb{R}}
\newcommand{\C}{\mathbb{C}}
 \newcommand{\tauV}{{\kern-3pt\tau}}
 \newcommand{\oVVVk}{\overline{\mbox{\boldmath$V$}}\kern-3pt}
 \newcommand{\tVVVk}{\tilde{\mbox{\boldmath$V$}}\kern-3pt}
 \renewcommand{\SS}{\mathscr{S}}
\begin{document}


 \title{Taming the Rotating Wave Approximation}

\author{Daniel Burgarth}

\affil{Department Physik, Friedrich-Alexander-Universit\"at Erlangen-N\"urnberg, Staudtstra\ss e 7, 91058 Erlangen, Germany}

\author{Paolo Facchi}

\affil{Dipartimento di Fisica, Universit\`{a} di Bari, I-70126 Bari, Italy, and INFN, Sezione di Bari, I-70126 Bari, Italy}

\author{Robin Hillier}

\affil{Department of Mathematics and Statistics, Lancaster University, Lancaster LA1 4YF, UK}
 \author{Marilena Ligab\`o}
\affil{Dipartimento di Matematica, Universit\`a di Bari, I-70125 Bari, Italy}

 \maketitle

\begin{abstract}
The interaction between light and matter is one of the oldest research areas of quantum mechanics, and a field that just keeps on delivering new insights and applications. With the arrival of cavity and circuit quantum electrodynamics we can now achieve strong light-matter couplings which form the basis of most implementations of quantum technology. But quantum information processing also has high demands requiring total error rates of fractions of percentage in order to be scalable (fault-tolerant) to useful applications. Since errors can also arise from modelling, this has brought into center stage one of the key approximations of quantum theory, the Rotating Wave Approximation (RWA) of the quantum Rabi model, leading to the Jaynes-Cummings Hamiltonian. While the RWA is often very good and incredibly useful to understand light-matter interactions, there is also growing experimental evidence of regimes where it is a bad approximation. Here, we ask and answer a harder question: for which experimental parameters is the RWA, although perhaps qualitatively adequate, already not good enough to match the demands of scalable quantum technology? For example, when is the error at least, and when at most, 1\%? To answer this, we develop rigorous non-perturbative bounds taming the RWA.

We find that these bounds not only depend, as expected, on the ratio of the coupling strength and the oscillator frequency, but also on the average number of photons in the initial state. This confirms recent experiments on photon-dressed Bloch-Siegert shifts. We argue that with experiments reporting controllable cavity states with hundreds of photons and with quantum error correcting codes exploring more and more of Fock space, this state-dependency of the RWA is increasingly relevant for the field of quantum computation, and our results pave the way towards a better understanding of those experiments.
\end{abstract}

The Rotating Wave Approximation (RWA) is one of the oldest and most important approximations in Quantum Theory. The starting point is at the birthplace of Nuclear Magnetic Resonance (NMR) in 1938, when Rabi and co-authors realized that rather than using rotating fields,  ``it is more convenient experimentally to use an oscillating field, in  which case the transition probability is approximately the same for weak oscillating fields near the resonance frequency''~\cite{rabi_new_1938}. This was significant: Rabi had shown earlier that the Schr\"odinger equation for rotating fields is easily solved analytically~\cite{rabi_space_1937}. This approximation was a crucial step in understanding driven quantum dynamics, 
as the time-dependent Schr\"odinger equation is notoriously hard to solve.

Perhaps this is the key reason for the popularity \cite{pop} of the RWA: it provides understanding and intuition of resonant driving. ~In fact, the importance of these ideas and the resulting techniques of NMR led to Rabi being awarded the Nobel Prize in Physics in 1944. But what justified the approximation, and how did Rabi get to it? Primarily reporting an experimental finding, Rabi himself does not provide justification, but over the last 80 years many different theoretical methods were used to provide justification and deeper understanding of the RWA (the literature is extensive, but see for instance~\cite{bloch_magnetic_1940,Shirley1965,haeberlen_coherent_1968,agarwal_rotating-wave_1973,burgarth_one_2022}).

Rabi described the atom as a two-level system and the field classically. In the full quantum description of light-matter interaction the situation is much more complicated. By the 1960s Quantum Electrodynamics was well established, and the electromagnetic field is now  itself a quantum system described by unbounded operators. Jaynes and Cummings~\cite{jaynes_comparison_1963} developed the full quantum mechanical version of the Rabi model (now called Quantum Rabi Model)
\begin{equation}
H=\frac{\Omega}{2} \sigma_z + \omega a^{\dagger} a+\lambda \sigma_x (a+a^{\dagger}),
\end{equation}
and applied the RWA to obtain the Jaynes-Cummings model 
\begin{equation}
H_{\textrm{RWA}}=\frac{\Omega}{2} \sigma_z +\omega a^{\dagger}a+\lambda (\sigma_+ a+ \sigma_- a^{\dagger}).
\end{equation}
Here, $\Omega$ is the energy difference between the two states of the atom, $\omega$ the light frequency and $\lambda$ the strength of the light-matter coupling; we always use $\hbar=1$.

\begin{figure}[t]
\centering
	\includegraphics[width=\columnwidth]{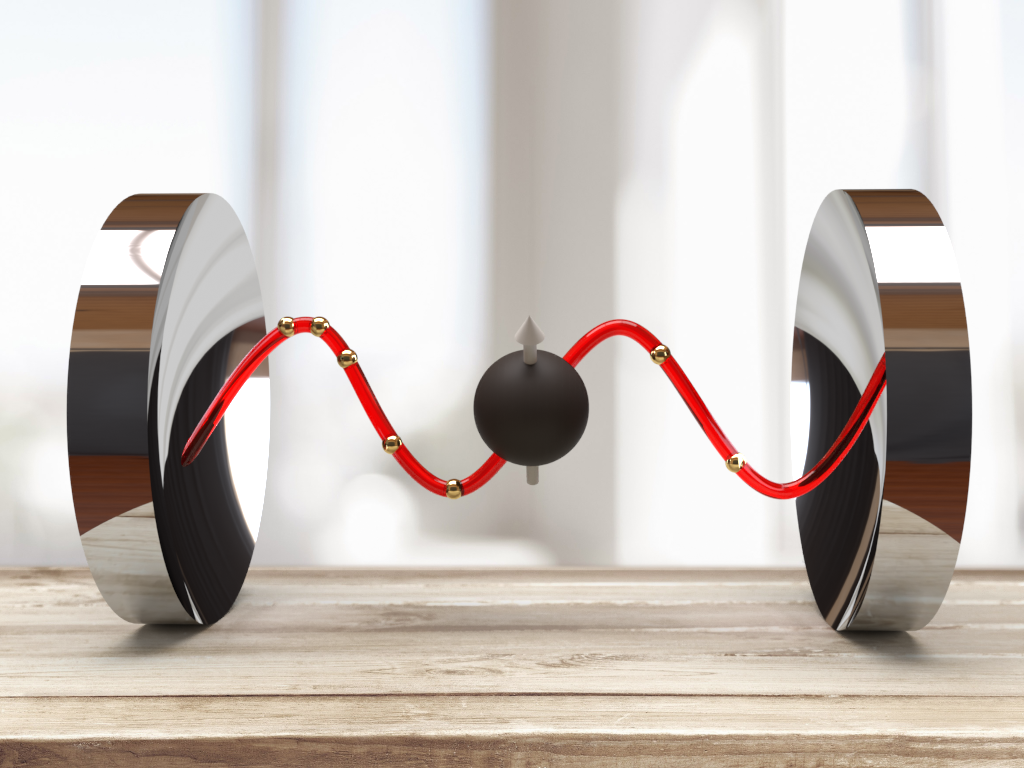}
	\caption{\label{nicepic}Light-matter interactions have been a major driver in quantum physics for half a decade. Often, atoms are placed into cavities to amplify their effective coupling strength with photons. Here, we show that the rotating wave approximation is not only determined by such coupling strength and the frequency of the driving, but also by the number of photons (naively depicted as golden spheres) in the cavity.}
\end{figure}
	
Due to its simplicity and wide range of applicability, the Jaynes-Cummings model is the main work horse of light-matter interactions and, by extension, quantum technology. For an excellent overview of its scope see~\cite{larson_jaynes-cummings_2021}. While at the time of the original paper the RWA was rather natural, given that the bare coupling between matter and light tends to be extremely weak, in cavity and circuit QED nowadays it is well understood that the effective coupling can be enhanced to a level where the RWA breaks down. This is often referred to as the Ultrastrong Coupling regime. For examples of experiments, see~\cite{forn-diaz_observation_2010} and~\cite{li_vacuum_2018}, for a recent review see \cite{Nori2019}.

While there is no rigorous derivation of the RWA for the Jaynes-Cummings model till date, the common lore is that the ratio $g\equiv \lambda /\omega$ between the light-matter coupling and the light frequency is the key parameter \cite{larson_jaynes-cummings_2021}. This is motivated by perturbative arguments and of course backed up by extensive numerical studies and simulations. For a summary of the different regimes see Table 1.1 in~\cite{larson_jaynes-cummings_2021}, where it is argued that for $g\approx 0.1$ the RWA breaks down.
\begin{figure}[t]
\centering
	\includegraphics[width=\columnwidth]{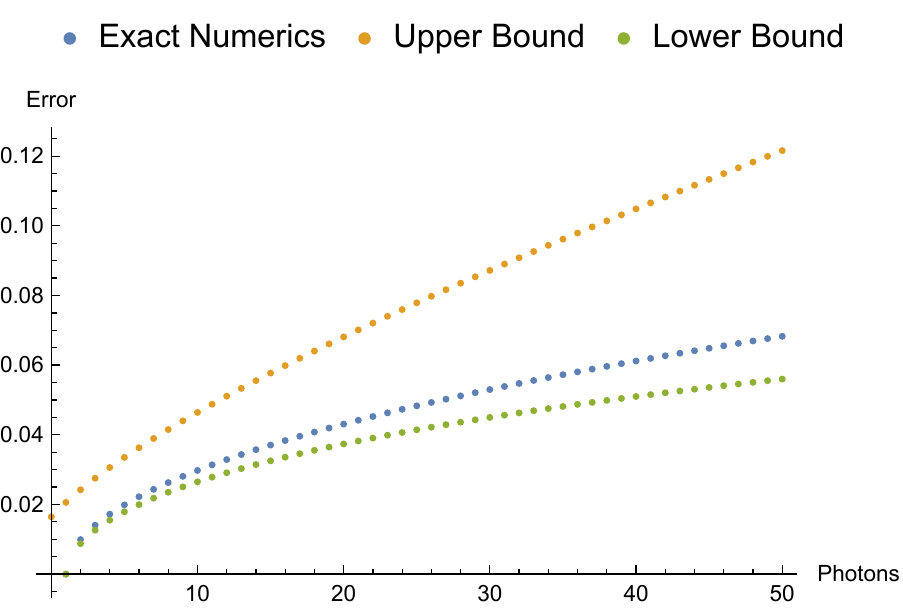}
	\caption{\label{fig:upperlower}Bounding the error of the RWA. We consider a Fock state evolving under the quantum Rabi and the Jaynes-Cummings model, respectively. We show our analytical upper and lower bounds and the exact numerical norm difference between the two models. We see that the error grows with the photon number, and that the bounds provide a good understanding of the scaling  (other parameters here $g=\frac{\lambda}{\omega}=\frac{1}{100}$, $\Delta =0$, $t\approx 0.04/\omega$).}
\end{figure}
On the other hand, this picture changes for high photon numbers. Indeed, Walls showed~\cite{walls_higher_1972} that the Bloch-Siegert shift (taken as a sign of the breakdown of the RWA) scales with the number of photons. This was also observed experimentally~\cite{wang_photon-dressed_2020}.  See also~\cite{puri_mathematical_2011} for a perturbative argument that $\lambda\sqrt{\langle a^{\dagger} a\rangle }\ll \omega$ is a more relevant condition in that regime.

What this means is that the quality of the RWA does not only depend on the parameters of the model, but also on the initial state of the system. See Fig.~\ref{fig:upperlower} for a numerical example. Indeed, we prove that there are short times  $t\le \pi/\omega$ for which
 \begin{equation}\label{lb}
 \|e^{-itH}-e^{-itH_{\textrm{RWA}}}\|\ge \frac{1}{6}
 \end{equation}
  for $any$ parameter value \cite{remark}. This should be considered as a big error, because the biggest difference between two unitaries is $2$ and because modern quantum technology demands errors well below $1\%$ (see below). Does this mean that the RWA is wrong? No, because we also show that for any state $\varphi$ and any time $t$,
\begin{equation}\label{eq:main}
e^{-itH}\varphi-e^{-itH_{\textrm{RWA}}}\varphi\to 0, \quad \text{as } g\to 0.
\end{equation}
This is our main result, providing a rigorous justification to the RWA. It does not contradict Eq.~(\ref{lb}), but is a typical phenomenon of unbounded Hamiltonians such as $H$ and $H_{\textrm{RWA}}$: there is no norm convergence, only state-dependent convergence. This is one the key technicalities that make it hard to apply standard perturbative arguments for the RWA.

Let us discuss the relevance of this photon-dependence in the context of quantum technology.  For fault-tolerant quantum computation, very high fidelity with error rates  $<10^{-3}$ are required  \cite{Gottesman2022}.  Moreover, modern qubit designs such as GKP \cite{Grimsmo2021} and CAT qubits use cavity states and explore high numbers of photons. In particular, CAT states have been created with about 100 photons~\cite{vlastakis_deterministically_2013}.
It is therefore necessary to have a good handle of the error of the RWA. Since quantum algorithms also invoke dynamics, it is not sufficient to simply match spectral properties, as it is usually done, but we need to bound the difference in evolution operators. The interesting evolution time regime here are short times up to $\pi/\omega$: already there, the RWA dynamics can deviate substantially. We show that the maximal error $\epsilon_n$ that the RWA has for an $n-$photon Fock state in a short time interval up to $\pi /\omega $ is bounded between 
\begin{equation}
\label{lowerbound2}
5g\sqrt{n +3} \ge \epsilon_n  \ge \frac{1}{6}-\frac{1}{216 g^2 n}-\frac{7}{12 n},
\end{equation}
proving that the RWA becomes good for small $g$ but bad for large $n$. Tighter and more general bounds and the full proofs of our results are provided in the Appendix. See also Fig.~\ref{fig:upperlower} for numerical examples of these refined bounds. These bounds prove that $g\sqrt{n}$ is the right parameter (as anticipated by the perturbative argument~\cite{puri_mathematical_2011}) for the validity of the RWA for Fock states. For more general states, see the Appendix. These bounds will be useful for experimentalists in quantum information to judge if they should apply the RWA or not.

We would now like to explain the idea which allows us to tame the RWA. Although there are many different conceptual ideas trying to justify the RWA, almost all of them agree that `highly oscillatory terms' in a Hamiltonian may sometimes be discarded to a good approximation. But why? Interestingly, some have argued that such terms are not observable, since measurements take finite time. This is plausible; however it turns out that even if measurements are instantaneous, the RWA can be taken. Others argue on the basis of first order perturbation theory, when the term involves an integral over the Hamiltonian. This gives a good qualitative picture but makes it impossible to compute a rigorous and precise picture. In a more recent work~\cite{burgarth_one_2022} a different route was taken: by an integration by part, the difference between two evolutions can indeed be written in terms of an integral over the difference of their generating Hamiltonians, where fast oscillations average out. This allows one to prove and provide bounds for the RWA, but only in the finite dimensional case. Here, we develop an integration by part to unbounded operators. In the general case, this is hard, so we are employing several structures of the specific problem of the quantum Rabi model to simplify the analysis. First, both $H$ and $H_{\textrm{RWA}}$ are time-independent, so we can use the rich theory of semigroups. Secondly, $H_{\textrm{RWA}}$ has many conserved quantities and can only increase and decrease the photon number by one. Finally, all involved quantities are well-defined on the subspace of rapidly decreasing functions and leave it invariant, which allows us to work on that subspace. We refer to the Appendix for the mathematical details.

To summarize, after decades of work and conjectures around the RWA for the highly relevant quantum Rabi model, we have now got a rigorous proof and in addition a complete quantitative measure in terms of lower and upper bounds on the error of the approximation. In particular, this confirms the experimental and numerical findings that the error becomes large for large ratio $g$ between light-matter coupling and light frequency or for large photon numbers and hence the dependence on the state of the system. In practice, for given fixed photon number and given maximally permissible error this tells us how small $g$ has to be in order for the RWA to work. Since experiments are working with ever growing systems, our results will be of immediate relevance to the understanding and setup of those experiments and further developments in quantum technology. We expect that the methods developed for our proof can be applied to tame the RWA for other interesting models, such as systems with multiple modes, nonlinearities and other descendants of the Jaynes-Cummings model \cite{larson_jaynes-cummings_2021}. Finally, we remark that the RWA is also important in quantum control, where it can be used to understand selective population transfer via frequency tuning \cite{dda,chambrion,augier,robin}. 
It will be interesting to apply our bounds to this scenario in future studies.

\acknowledgments
DB acknowledges funding by the Australian Research Council (project numbers FT190100106, DP210101367, CE170100009) and the Munich Quantum Valley. PF and ML were partially supported by the Italian National Group of Mathematical Physics (GNFM-INdAM), by Istituto Nazionale di Fisica Nucleare (INFN) through the project ``QUANTUM'', and by Regione Puglia and QuantERA ERA-NET Cofund in Quantum Technologies (GA No. 731473), project PACE-IN. ML acknowledges the support by PNRR MUR project PE0000023- NQSTI.

\onecolumn

\part*{Appendix}
\section{Time evolution of the Rabi and the Jaynes-Cummings models}
We consider the infinite dimensional Hilbert space $L^2(\R)$, the creation operator $a^{\dagger}=\frac{1}{\sqrt{2}}\left( x- \frac{d}{dx}\right)$ and the annihilation operator $a=\frac{1}{\sqrt{2}}\left( x+ \frac{d}{dx}\right)$ on Schwartz space $\SS(\R)$. A fundamental feature of these two operators is that their commutator is the identity operator, i.e.
\begin{equation}
[a,a^{\dagger}]=I.
\end{equation}
Now we consider the following two Hamiltonians
\begin{equation}
H=\frac{\Omega}{2} \sigma_z \otimes I+ I \otimes \omega a^{\dagger}a+\lambda \sigma_x \otimes (a+a^{\dagger})
\end{equation}
and
\begin{equation}
H_{\textrm{RWA}}=\frac{\Omega}{2} \sigma_z \otimes I+ I \otimes \omega a^{\dagger}a+\lambda (\sigma_+ \otimes a+ \sigma_-\otimes a^{\dagger})
\end{equation}
on $\C^2\otimes \SS(\R)$, and we also denote their closures by $H$ and $H_{\textrm{RWA}}$, with suitable dense domains. Here $\lambda, \omega, \Omega \in \R$,
\begin{equation}
\sigma_x=\left( \begin{array}{cc} 0 & 1 \\ 1 & 0 \end{array}\right), \quad \sigma_y=\left( \begin{array}{cc} 0 & -i \\ i & 0 \end{array}\right), \quad \sigma_z=\left( \begin{array}{cc} 1 & 0 \\ 0 &-1 \end{array}\right),
\end{equation}
are the Pauli matrices and
\begin{equation}
\sigma_+=\left( \begin{array}{cc} 0 & 1 \\ 0 & 0 \end{array}\right), \quad \sigma_-=\left( \begin{array}{cc} 0 & 0 \\ 1 & 0 \end{array}\right).
\end{equation}
In what follows, we usually work on $\C^2\otimes \SS(\R)$ without saying so explicitly every time, as this subspace of $\C^2\otimes L^2(\R)$ forms a common invariant core of the operators we are studying in this section; this can be shown following the standard methods in~\cite[Sec.X.6]{RS2}. To simplify the notation, we usually use the same notation for an operator on this core and its closure on the whole domain.

\subsection{Interaction picture: time dependent Hamiltonians $H_1(t)$ and $H_2(t)$}
We consider the following Hamiltonian 
\begin{equation}
H_0=\frac{\omega}{2} \sigma_z \otimes I+ I \otimes \omega a^{\dagger}a,
\end{equation}
and define for all $t \in \R$
\begin{equation}
U_1(t)=e^{it H_0}e^{-it H}, \quad U_2(t)=e^{it H_0}e^{-it H_{\textrm{RWA}}}.
\end{equation}
We have that for all $j \in \{1,2\}$: $U_j(0)=I$ and
\begin{equation}\label{eqn:SEtime}
i  \frac{d U_j(t)}{dt}=H_j(t) U_j(t),
\end{equation}
where for all $t\in \R$:
\begin{equation}
H_1(t)=e^{it H_0}(H-H_0)e^{-it H_0}, \quad H_2(t)=e^{it H_0}(H_{\textrm{RWA}}-H_0)e^{-it H_0},
\end{equation}
We define the detuning between field and atom as $\Delta=\Omega-\omega$ and compute $H_1(t)$:
\begin{eqnarray}
H_1(t)&=& e^{it H_0}(H-H_0)e^{-it H_0} \nonumber \\
         &=& e^{it H_0}\left(\frac{\Delta}{2} \sigma_z \otimes I + \lambda \sigma_x \otimes (a+a^{\dagger})\right)e^{-it H_0} \nonumber \\
         & = &  \frac{\Delta}{2} \sigma_z \otimes I+ \lambda (e^{it \omega \sigma_z /2} \sigma_x e^{-it \omega \sigma_z /2}) \otimes (e^{it \omega a^{\dagger} a}(a+a^{\dagger})e^{-it \omega a^{\dagger} a})
\end{eqnarray}
We get
\begin{eqnarray}
e^{it \omega \sigma_z /2} \sigma_x e^{-it \omega \sigma_z /2}= \cos \left(t \omega \right) \sigma_x- \sin\left(t \omega \right) \sigma_y,
\end{eqnarray}
and
\begin{equation}
	e^{it \omega a^{\dagger} a}ae^{-it \omega a^{\dagger} a}=e^{-it\omega } a, \qquad e^{it \omega a^{\dagger} a}a^{\dagger}e^{-it \omega a^{\dagger} a}=e^{it\omega} a^{\dagger}.
\end{equation}
Therefore,
\begin{eqnarray}
H_1(t)&=&  \frac{\Delta}{2} \sigma_z \otimes I+ \lambda \left(\cos \left(t \omega \right) \sigma_x- \sin\left(t \omega \right) \sigma_y \right) \otimes \left( e^{-it\omega } a+e^{it\omega} a^{\dagger}\right) \nonumber \\
          &=& \frac{\Delta}{2} \sigma_z \otimes I+ \lambda \left(\sigma_+ \otimes a + \sigma_- \otimes a^{\dagger}+ e^{2it\omega}\sigma_+ \otimes a^{\dagger} + e^{-2it\omega}\sigma_- \otimes a\right). \nonumber \\
\end{eqnarray}
In a similar way we compute $H_2(t)$:
\begin{eqnarray}
H_2(t)&=& e^{it H_0}(H_{\textrm{RWA}}-H_0)e^{-it H_0} \nonumber \\
         &=& e^{it H_0}\left(\frac{\Delta}{2} \sigma_z \otimes I + \lambda( \sigma_+ \otimes a+\sigma_- \otimes a^{\dagger})\right)e^{-it H_0} \nonumber \\
        & = &  \frac{\Delta}{2} \sigma_z \otimes I + \lambda  (\sigma_+\otimes  a+  \sigma_-  \otimes   a^{\dagger} ).
\end{eqnarray}
We notice that $H_2$ is time-independent, and again we use the same symbol for its closure.

\subsection{Computation of  $U_2(t)$}

Notice that, since $H_2$ is time-independent, $\{U_2(t)\}_{t \in \R}$ is a unitary group: for all $t \in \R$,
\begin{equation}
U_2(t)=e^{itH_0}e^{-itH_{\textrm{RWA}}}=e^{-it \left(\frac{\Delta}{2} \sigma_z \otimes I + \lambda  (\sigma_+\otimes  a+  \sigma_-  \otimes   a^{\dagger} )\right)}= e^{-it H_2}.
\end{equation} 
We observe that $\C^2 \otimes \SS(\R) \subset D(H_2)$ is a set of analytic vectors for $H_2$. Let 
\begin{equation}
e_1=\left( \begin{array}{c} 1\\ 0  \end{array}\right), \qquad e_2=\left( \begin{array}{c} 0\\ 1  \end{array}\right)
\end{equation}
be the canonical orthonormal basis of $\C^2$. Using the following identities  
\begin{equation}
\sigma_+e_1=\left( \begin{array}{c} 0\\ 0  \end{array}\right), \quad \sigma_+ e_2=e_1, \quad  \sigma_-e_1=e_2, \quad \sigma_- e_2=\left( \begin{array}{c} 0\\ 0  \end{array}\right),
\end{equation}
we compute the even and the odd powers of $H_2$ on vectors $e_1 \otimes \psi$ and $e_2 \otimes \psi$, with $\psi \in \SS(\R)$. For all $j \in \N$,
\begin{equation}
H_2^{2j}(e_1 \otimes \psi)= \sum_{\ell=0}^j \left( \begin{array}{c}j \\ \ell \end{array}\right) \lambda^{2\ell} \left( \frac{\Delta}{2}\right)^{2(j-\ell)}e_1 \otimes (a a^{\dagger})^\ell \psi,
\end{equation}
\begin{equation}
H_2^{2j}(e_2 \otimes \psi)= \sum_{\ell=0}^j \left( \begin{array}{c}j \\ \ell \end{array}\right) \lambda^{2\ell} \left( \frac{\Delta}{2}\right)^{2(j-\ell)}e_2 \otimes (a^{\dagger} a)^\ell \psi,
\end{equation}
\begin{equation}
H_2^{2j+1}(e_1 \otimes \psi)= \sum_{\ell=0}^j \left( \begin{array}{c}j \\ \ell \end{array}\right) \left[ \lambda^{2\ell} \left( \frac{\Delta}{2}\right)^{2(j-\ell)+1}e_1 \otimes (a a^{\dagger})^\ell \psi  +\lambda^{2\ell+1} \left( \frac{\Delta}{2}\right)^{2(j-\ell)}e_2 \otimes a^{\dagger} (a a^{\dagger})^\ell \psi \right],
\end{equation}
\begin{equation}
H_2^{2j+1}(e_2 \otimes \psi)= \sum_{\ell=0}^j \left( \begin{array}{c}j \\ \ell \end{array}\right) \left[ -\lambda^{2\ell} \left( \frac{\Delta}{2}\right)^{2(j-\ell)+1}e_2 \otimes ( a^{\dagger} a)^\ell \psi  +\lambda^{2\ell+1} \left( \frac{\Delta}{2}\right)^{2(j-\ell)}e_1 \otimes a (a^{\dagger}a )^\ell \psi \right].
\end{equation}

\begin{lemma}\label{lemma:U2}
$U_2(t)(\C^2 \otimes \SS(\R)) \subset \C^2 \otimes \SS(\R)$ for all $t \in \R$.
\end{lemma}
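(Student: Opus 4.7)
The plan is to exploit the fact that $H_2$ commutes with the total excitation-number operator
\begin{equation*}
N := \sigma_+\sigma_- \otimes I + I \otimes a^{\dagger}a
\end{equation*}
on $\C^2 \otimes \SS(\R)$, which follows from the commutation relations $[\sigma_+\sigma_-,\sigma_\pm]=\pm\sigma_\pm$ together with $[a^{\dagger}a, a]=-a$ and $[a^{\dagger}a, a^{\dagger}]=a^{\dagger}$. Since $N$ has pure point spectrum $\N$ with finite-dimensional eigenspaces, this commutation yields a block decomposition of both $H_2$ and of the unitary group $U_2(t)=e^{-itH_2}$.

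First I would describe these eigenspaces explicitly. Writing $\{|n\rangle\}_{n\in\N}$ for the Hermite (Fock) basis of $L^2(\R)$, the $N$-eigenspace $V_n$ for eigenvalue $n\geq 1$ is the two-dimensional span of $e_1\otimes|n-1\rangle$ and $e_2\otimes|n\rangle$, while $V_0$ is the line spanned by $e_2\otimes|0\rangle$. Using $a^{\dagger}a|n\rangle=n|n\rangle$ and $aa^{\dagger}|n\rangle=(n+1)|n\rangle$, the explicit power formulas for $H_2^{2j}$ and $H_2^{2j+1}$ already displayed in the excerpt show immediately that $H_2$ maps each $V_n$ into itself, so it restricts to a self-adjoint matrix $H_2^{(n)}$ on the finite-dimensional $V_n$, and $\C^2\otimes\SS(\R)$ is the algebraic direct sum of the $V_n$.

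Next I would pass from $H_2$ to $U_2(t)$. Since $\C^2\otimes\SS(\R)$ is a dense set of analytic vectors for $H_2$ consisting of a direct sum of finite-dimensional $H_2$-invariant subspaces, Nelson's theorem yields essential self-adjointness of $H_2$, and the spectral theorem gives the decomposition $U_2(t)=\bigoplus_{n\in\N} e^{-itH_2^{(n)}}$, where each $e^{-itH_2^{(n)}}$ is an ordinary matrix unitary on $V_n$. I would then invoke the standard Hermite-coefficient characterization of Schwartz space: a vector $\psi=\sum_n c_n|n\rangle\in L^2(\R)$ lies in $\SS(\R)$ if and only if $(c_n)$ decays faster than any inverse power of $n$; consequently $\Phi\in\C^2\otimes\SS(\R)$ iff both Hermite-coefficient sequences of its two spinor components decay rapidly.

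Combining these ingredients, expand $\Phi\in\C^2\otimes\SS(\R)$ as
\begin{equation*}
\Phi=\sum_{n\geq 0}\bigl(\alpha_n\,e_1\otimes|n-1\rangle + \beta_n\,e_2\otimes|n\rangle\bigr)
\end{equation*}
(with the convention $\alpha_0:=0$); then the expansion of $U_2(t)\Phi$ in the same basis has coefficients $(\alpha'_n,\beta'_n)^{T}=e^{-itH_2^{(n)}}(\alpha_n,\beta_n)^{T}$. Each $e^{-itH_2^{(n)}}$ has operator norm $1$, so $|\alpha'_n|,|\beta'_n|\leq\sqrt{|\alpha_n|^2+|\beta_n|^2}$, which decays faster than any inverse power of $n$; hence $U_2(t)\Phi\in\C^2\otimes\SS(\R)$. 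The main subtlety I anticipate is the rigorous passage from the formal commutation $[H_2,N]=0$ on the core to the block decomposition of the unitary group; this is handled by noting that on each finite-dimensional $V_n$ the restriction $H_2^{(n)}$ is self-adjoint, and the direct sum of the matrix unitaries $e^{-itH_2^{(n)}}$ is a strongly continuous unitary group on $\bigoplus_n V_n$ solving Eq.~(\ref{eqn:SEtime}) on the analytic-vector core, hence coinciding with $U_2(t)$ by uniqueness.
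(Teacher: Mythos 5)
Your proof is correct and follows essentially the same route as the paper: both arguments rest on the conservation of the total excitation number, which makes $U_2(t)$ block-diagonal over the two-dimensional subspaces $V_n$, combined with the $N$-representation (Hermite-coefficient) characterization of $\SS(\R)$; the paper simply writes out the block entries $a_n(t),b_n(t),c_n(t),d_n(t)$ explicitly where you invoke unitarity of each $2\times 2$ block to bound them, a mild abstraction of the same computation. One small slip: $\C^2\otimes\SS(\R)$ is \emph{not} the algebraic direct sum of the $V_n$ (that would be the finite linear combinations of Fock states, a proper dense subspace), but your argument never actually uses this claim --- it only needs the Hilbert-space direct sum decomposition together with the fact that the algebraic sum is a core on which the generators agree.
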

\begin{proof}
By the $N$-representation theorem for $\SS(\R)$~\cite[Thm.V.13]{RS1}, we have that $\psi \in \SS(\R)$ if and only if for all $m \in \N$:
\begin{equation}
\sup_{n \in \N} |\langle \varphi_n | \psi \rangle |n^m <+\infty,
\end{equation}
where $\{\varphi_n\}_{n \in \N}$ is the orthonormal eigenbasis of the number operator $a^{\dagger}a$, i.e. $a^{\dagger}a \varphi_n= n \varphi_n$ for all $n \in \N$. 
We have that for all $n \in \N$ and $t \in \R$:
\begin{equation}
U_2(t)(e_1 \otimes \varphi_n)=e_1 \otimes a_n(t) \varphi_n + e_2 \otimes b_{n}(t) \varphi_{n+1}
\end{equation}
and 
\begin{equation}
U_2(t)(e_2 \otimes \varphi_n)=e_1 \otimes c_n(t) \varphi_{n-1} + e_2 \otimes d_{n}(t) \varphi_{n}
\end{equation}
where
\begin{equation}
a_n(t)=\cos\left( t \sqrt{\lambda^2(n+1)+ \left( \frac{\Delta}{2}\right)^2}\right)- \frac{i\Delta}{2}\frac{\sin \left( t \sqrt{\lambda^2(n+1)+ \left( \frac{\Delta}{2}\right)^2}\right) }{ \sqrt{\lambda^2(n+1)+ \left( \frac{\Delta}{2}\right)^2}},
\end{equation}
\begin{equation}
b_n(t)=-i \lambda \sqrt{n+1}\frac{\sin \left( t \sqrt{\lambda^2(n+2)+ \left( \frac{\Delta}{2}\right)^2}\right)}{ \sqrt{\lambda^2(n+2)+ \left( \frac{\Delta}{2}\right)^2}},
\end{equation}
\begin{equation}
c_n(t)=-i \lambda \sqrt{n}\frac{\sin \left( t \sqrt{\lambda^2n+ \left( \frac{\Delta}{2}\right)^2}\right)}{ \sqrt{\lambda^2n+ \left( \frac{\Delta}{2}\right)^2}}
\end{equation}
and
\begin{equation}
d_n(t)=\cos\left( t \sqrt{\lambda^2n+ \left( \frac{\Delta}{2}\right)^2}\right)+ \frac{i\Delta}{2}\frac{\sin \left( t \sqrt{\lambda^2n+ \left( \frac{\Delta}{2}\right)^2}\right) }{ \sqrt{\lambda^2n+ \left( \frac{\Delta}{2}\right)^2}}.
\end{equation}
Let $\psi \in \SS(\R)$ and  $t \in \R$, then
\begin{equation}
U_2(t)(e_1 \otimes \psi)=e_1 \otimes \psi_1 + e_2 \otimes \psi_2, \quad U_2(t)(e_2 \otimes \psi)=e_1 \otimes \psi_3 + e_2 \otimes \psi_4,
\end{equation}
where 
\begin{equation}
\psi_1= \sum_{n =0}^{+\infty} \langle \varphi_n | \psi \rangle a_n(t) \varphi_n, \quad \psi_2=\sum_{n =0}^{+\infty} \langle \varphi_n | \psi \rangle b_n(t) \varphi_{n+1},
\end{equation}
and
\begin{equation}
\psi_3= \sum_{n =1}^{+\infty} \langle \varphi_n | \psi \rangle c_n(t) \varphi_{n-1}, \quad \psi_4=\sum_{n =0}^{+\infty} \langle \varphi_n | \psi \rangle d_n(t) \varphi_{n}.
\end{equation}

Notice that for all $m \in \N$ and for all $j\in\{1,2,3,4\}$:
\begin{equation}
\sup_{n \in \N} |\langle \varphi _n|\psi_j\rangle | n^m <+\infty,
\end{equation}
hence $\psi_j \in \SS(\R)$ and therefore $U_2(t)(\C \otimes \SS(\R)) \subset \C \otimes \SS(\R)$.
\end{proof}

\begin{lemma}\label{lemmaaction}
For all $t \in \R$ and $\Psi \in \C^2 \otimes \SS(\R) $, we have
\begin{equation}
\int_0^t (H_2-H_1(s) )\Psi \,ds=-\frac{\lambda \sin \left(t \omega \right) }{\omega} \left(e^{it\omega}\sigma_+ \otimes a^{\dagger} + e^{-it\omega}\sigma_- \otimes a \right)\Psi,
\end{equation}
and we denote the closure of this operator by $S_{21}(t)$. Moreover:
\begin{itemize}
\item $S_{21}(t)(\C^2 \otimes \SS(\R))  \subset \C^2 \otimes \SS(\R)$ for all $t \in \R$;
\item for all $\Psi \in \C^2 \otimes \SS(\R) $ and for all $t \in \R$: $\frac{d}{dt}S_{21}(t)\Psi = (H_2(t)-H_{1}(t))\Psi$.
\end{itemize}
\end{lemma}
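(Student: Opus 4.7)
The plan is to evaluate $H_2 - H_1(s)$ explicitly using the two formulas derived just above, integrate the resulting scalar prefactors, and then verify the two bullet claims by elementary arguments.

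First, since $H_2$ is time-independent, subtracting the computed $H_1(s)$ from $H_2$ cancels the $(\Delta/2)\sigma_z\otimes I$ contribution together with the non-oscillating $\sigma_\pm\otimes a^{(\dagger)}$ terms, leaving
\begin{equation*}
H_2 - H_1(s) = -\lambda\bigl(e^{2is\omega}\sigma_+\otimes a^\dagger + e^{-2is\omega}\sigma_-\otimes a\bigr).
\end{equation*}
For fixed $\Psi\in \C^2\otimes \SS(\R)$ the map $s\mapsto (H_2-H_1(s))\Psi$ is a smooth $\C^2\otimes L^2(\R)$-valued function, since $a$ and $a^\dagger$ send $\SS(\R)$ into itself (and hence into $L^2(\R)$) and the scalar prefactors are smooth. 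The strong (Hilbert space) integral on $[0,t]$ then reduces to componentwise scalar integration:
\begin{equation*}
\int_0^t e^{\pm 2is\omega}\,ds = \frac{e^{\pm 2it\omega}-1}{\pm 2i\omega} = \frac{e^{\pm it\omega}\sin(t\omega)}{\omega},
\end{equation*}
where the second equality follows by factoring out $e^{\pm it\omega}$ and using $\sin x=(e^{ix}-e^{-ix})/(2i)$. Re-attaching the operator factors $\sigma_\pm\otimes a^{(\dagger)}$ yields the claimed identity for the action on $\C^2\otimes \SS(\R)$; the closure then extends the formula to the stated domain and defines $S_{21}(t)$.

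For the first bullet, $S_{21}(t)$ acts on the core as a scalar multiple of $e^{it\omega}\sigma_+\otimes a^\dagger + e^{-it\omega}\sigma_-\otimes a$, a finite linear combination of the operators $\sigma_\pm\otimes a^{(\dagger)}$, each of which maps $\C^2\otimes \SS(\R)$ into itself; hence so does $S_{21}(t)$. For the second bullet, the integrand $s\mapsto (H_2-H_1(s))\Psi$ is continuous at $s=t$ in the Hilbert-space norm, so the fundamental theorem of calculus gives $\frac{d}{dt}S_{21}(t)\Psi = (H_2-H_1(t))\Psi$, which equals $(H_2(t)-H_1(t))\Psi$ since $H_2(t)=H_2$. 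I do not anticipate any real obstacle: the entire argument unfolds on the common invariant core $\C^2\otimes \SS(\R)$, where operator-valued integration and differentiation collapse to standard vector-valued calculus, and the only mild care is to recognise that working on this core avoids having to integrate unbounded operators in an operator-norm sense.
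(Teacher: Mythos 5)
Your proposal is correct and follows essentially the same route as the paper: compute $H_2-H_1(s)$ explicitly, observe that only the counter-rotating terms $e^{\pm 2is\omega}\sigma_\pm\otimes a^{(\dagger)}$ survive, and integrate the scalar prefactors to get the $\sin(t\omega)e^{\pm it\omega}/\omega$ factors. The paper's proof consists of exactly this computation and leaves the two bullet points implicit, whereas you spell them out (invariance of $\SS(\R)$ under $a,a^\dagger$ and the fundamental theorem of calculus on the core); that is a harmless and slightly more complete presentation of the same argument.
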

\begin{proof}
On $\C^2 \otimes \SS(\R)$, we have
\begin{eqnarray} 
S_{21}(t)&:=&\int_0^t (H_2-H_1(s) )\,ds \nonumber \\
              &=&- \lambda \int_0^t \left(e^{2is\omega}\sigma_+ \otimes a^{\dagger} + e^{-2is\omega}\sigma_- \otimes a \right)\,ds \nonumber \\
              &=& -\frac{\lambda}{2i \omega }\left[e^{2is\omega}\right]_{s=0}^{s=t} \sigma_+ \otimes a^{\dagger}+\frac{\lambda }{2i \omega}\left[e^{-2is\omega}\right]_{s=0}^{s=t}\sigma_- \otimes a \nonumber \\
              &=&-\frac{\lambda }{ 2i\omega}\left(e^{2it\omega}-1\right)\sigma_+ \otimes a^{\dagger}+\frac{\lambda}{2i\omega}\left(e^{-2it\omega}-1\right)\sigma_- \otimes a \nonumber \\
              &=&-\frac{\lambda \sin \left(t \omega \right) }{\omega} \left(e^{it\omega}\sigma_+ \otimes a^{\dagger} + e^{-it\omega}\sigma_- \otimes a \right). 
\end{eqnarray} 
\end{proof}

\begin{lemma}\label{eqn:intparts1}
For all $\Psi \in \C^2 \otimes \SS(\R)$:
\begin{eqnarray}
i (U_2(t)-U_1(t))\Psi&=&  S_{21}(t)U_2(t)\Psi + \\
                                  &&+  i\int_{0}^t U_1(t)U_1(s)^\dagger (S_{21}(s)H_2-H_1(s)S_{21}(s))U_2(s)\Psi \, ds, \nonumber 
\end{eqnarray}
\end{lemma}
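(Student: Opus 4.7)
The plan is to view this identity as an integration by parts applied to the Duhamel representation of $U_2(t)-U_1(t)$, exploiting the identification $H_2-H_1(s)=\frac{d}{ds}S_{21}(s)$ established in Lemma~\ref{lemmaaction}.

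First, I would derive Duhamel's formula on the invariant core. Given $\Psi\in\C^2\otimes\SS(\R)$, I introduce the auxiliary vector-valued function $W(s):=U_1(t)U_1(s)^{\dagger}U_2(s)\Psi$ for $s\in[0,t]$. From the Schr\"odinger equations $i\frac{d}{ds}U_j(s)\Phi=H_j(s)U_j(s)\Phi$ on the core (which, by taking adjoints, yield $\frac{d}{ds}U_1(s)^{\dagger}\Phi=iU_1(s)^{\dagger}H_1(s)\Phi$), the product rule gives
\[\frac{dW}{ds}=-iU_1(t)U_1(s)^{\dagger}\bigl(H_2-H_1(s)\bigr)U_2(s)\Psi.\]
Integrating on $[0,t]$ and using $W(0)=U_1(t)\Psi$, $W(t)=U_2(t)\Psi$, I obtain the Duhamel identity
\[i\bigl(U_2(t)-U_1(t)\bigr)\Psi=\int_0^t U_1(t)U_1(s)^{\dagger}(H_2-H_1(s))U_2(s)\Psi\,ds.\]

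Next, I perform the integration by parts. Substituting $H_2-H_1(s)=\frac{d}{ds}S_{21}(s)$ from Lemma~\ref{lemmaaction} and differentiating the product $s\mapsto U_1(t)U_1(s)^{\dagger}S_{21}(s)U_2(s)\Psi$, the chain rule produces three contributions: one from $\frac{d}{ds}U_1(s)^{\dagger}=iU_1(s)^{\dagger}H_1(s)$, one from $\frac{d}{ds}S_{21}(s)=H_2-H_1(s)$, and one from $\frac{d}{ds}U_2(s)=-iH_2U_2(s)$. Since $S_{21}(0)=0$, the boundary term at $s=0$ vanishes while the one at $s=t$ equals $S_{21}(t)U_2(t)\Psi$; the two remaining terms reshuffle into
\[i\int_0^t U_1(t)U_1(s)^{\dagger}\bigl(S_{21}(s)H_2-H_1(s)S_{21}(s)\bigr)U_2(s)\Psi\,ds,\]
which, combined with the Duhamel identity above, yields the claim.

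The main obstacle is to justify these manipulations rigorously in the unbounded-operator setting. The strategy is to remain throughout on the common invariant core $\C^2\otimes\SS(\R)$. Lemma~\ref{lemma:U2} ensures invariance under $U_2(s)$; the analogous invariance under $U_1(s)$, and hence under $U_1(s)^{\dagger}=e^{isH}e^{-isH_0}$, follows from the same $N$-representation argument applied to the full Rabi flow together with the fact that $e^{-isH_0}$ preserves $\C^2\otimes\SS(\R)$. Because $S_{21}(s)$ is a first-order polynomial in $a$ and $a^{\dagger}$ with smooth bounded $s$-dependent coefficients, it too preserves the core, as do $H_1(s)$ and $H_2$. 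With these invariance properties in hand, every vector appearing in the derivation lies in $\C^2\otimes\SS(\R)$, strong differentiability holds at each step, and the product rule and integration by parts become genuine Bochner-integral identities, legitimizing the whole computation.
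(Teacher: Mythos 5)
Your argument is correct and follows essentially the same route as the paper's proof: first the Duhamel identity
\begin{equation*}
i\bigl(U_2(t)-U_1(t)\bigr)\Psi=\int_0^t U_1(t)U_1(s)^{\dagger}\bigl(H_2-H_1(s)\bigr)U_2(s)\Psi\,ds ,
\end{equation*}
obtained by differentiating $s\mapsto U_1(s)^{\dagger}U_2(s)\Psi$, and then integration by parts via the derivative of $s\mapsto U_1(s)^{\dagger}S_{21}(s)U_2(s)\Psi$, using $\frac{d}{ds}S_{21}(s)\Psi=(H_2-H_1(s))\Psi$ from Lemma~\ref{lemmaaction} and $S_{21}(0)=0$; all signs and boundary terms check out. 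The one flaw is in your justification paragraph: you assert that $\C^2\otimes\SS(\R)$ is invariant under $U_1(s)$ and $U_1(s)^{\dagger}$ ``by the same $N$-representation argument applied to the full Rabi flow.'' That argument does not transfer: Lemma~\ref{lemma:U2} rests on the explicit block structure of the Jaynes--Cummings evolution (conservation of the total excitation number, photon number changing by at most one), which the Rabi Hamiltonian lacks, and the paper explicitly remarks after Lemma~\ref{eqn:intparts1} that the analogous statement for $U_1$ would be much harder to prove. Fortunately this invariance is never needed: $U_1(t)$ and $U_1(s)^{\dagger}$ appear only as the outermost unitary factors, no unbounded operator is applied to their output, and the strong derivative $\frac{d}{ds}U_1(s)^{\dagger}\Phi=iU_1(s)^{\dagger}H_1(s)\Phi$ requires only $\Phi\in\C^2\otimes\SS(\R)$ (together with invariance of the core under $e^{-isH_0}$), not that $U_1(s)^{\dagger}\Phi$ stay in the core. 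What the proof genuinely needs --- and what Lemmas~\ref{lemma:U2} and~\ref{lemmaaction} supply --- is that $U_2(s)$, $S_{21}(s)$, $H_1(s)$ and $H_2$ preserve the core, so that the vectors $S_{21}(s)H_2U_2(s)\Psi$ and $H_1(s)S_{21}(s)U_2(s)\Psi$ are well defined; with that substitution for your invariance claim, your proof is complete.
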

\begin{proof}
Let $\Psi \in \C^2 \otimes \SS(\R)$, by Lemmas~\ref{lemma:U2} and~\ref{lemmaaction} we have that for all $s \in \R$:
\begin{equation}
U_{2}(s)\Psi, S_{21}(s)H_2U_{2}(s)\Psi, H_1(s)S_{21}(s)U_2(s)\Psi \in \C^2\otimes \SS(\R).
\end{equation}
By equation~(\ref{eqn:SEtime}) we have that
\begin{eqnarray}
i (U_2(t)-U_1(t))\Psi &=& iU_1(t)(U_1(t)^\dagger U_2(t)-I)\Psi \nonumber \\
                                                  &=& iU_1(t)\left[ U_1(s)^{\dagger}U_2(s)\right]_0^t\Psi \nonumber \\
                                                  &=& iU_1(t)\int_{0}^t \frac{d}{d s}U_1(s)^{\dagger}U_2(s)\Psi  \, ds \nonumber \\
                                                 &=& U_1(t)\int_{0}^t U_1(s)^{\dagger}(H_2-H_1(s))U_2(s)\Psi \, ds. 
\end{eqnarray}
We observe that for all $s \in \R$:
\begin{eqnarray}
\frac{d}{ds}\left( U_1(s)^{\dagger} S_{21}(s)U_{2}(s)\Psi \right)&=& iU_1(s)^{\dagger}(H_1(s)S_{21}(s)-S_{21}(s)H_2)U_2(s)\Psi + \nonumber\\
 && +U_1(s)^{\dagger}(H_2-H_1(s))U_2(s)\Psi ,
\end{eqnarray}
therefore 
\begin{eqnarray}
i (U_2(t)-U_1(t))\Psi &=& U_1(t)\int_{0}^t U_1(s)^{\dagger}(H_2-H_1(s))U_2(s)\Psi \, ds \nonumber \\
                                 &=& S_{21}(t)U_2(t)\Psi \nonumber \\
                                 && +i  \int_{0}^t U_1(t)U_1(s)^{\dagger}(S_{21}(s)H_2-H_1(s)S_{21}(s))U_2(s)\Psi \, ds .
\end{eqnarray}

\end{proof}

Notice that a similar Lemma might hold with $U_1(t)$ and $U_2(t)$ interchanged. This would however be much harder to prove, as our current prove relies on the simple structure of the Jaynes-Cummings interaction through Lemma~\ref{lemma:U2}.
\section{Computation of bounds and the rotating wave approximation}
Without loss of generality, we can assume  $\lambda, \Omega, \omega > 0$.
\subsection{Upper bound for generic vectors}
\begin{theorem}
For all $\Psi \in \C^2 \otimes \SS(\R)$ and $t \in \R$:
\begin{equation}\label{upperbound}
\left\Vert (U_{2}(t)-U_{1}(t))\Psi \right\Vert  \le 	 \frac{\lambda}{\omega} \biggl[
	\| (N +2)^{1/2} \Psi \| 
	+ |t| \Bigl( |\Delta| \|(N+2)^{1/2} \Psi \| + 3 \lambda \|\bigl((N+2)(N+3)\bigr)^{1/2}\Psi \|\Bigr)	\biggr],
\end{equation}
where $N=I \otimes a^{\dagger}a$. Moreover for all $\Psi \in \C^2 \otimes L^2(\R)$:
\begin{equation}\label{approxim}
\lim_{\omega \to + \infty} \left\|  (e^{-it H_{\textrm{RWA}}}-e^{-it H})\Psi \right\| =0,
\end{equation}
uniformly for $t$ in compact sets.
\end{theorem}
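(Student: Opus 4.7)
The plan is to take the integration-by-parts identity from Lemma~\ref{eqn:intparts1} as the starting point and, after applying the triangle inequality and the unitarity of $U_1(t)U_1(s)^\dagger$, reduce the proof to two operator-on-state estimates: of the boundary term $\|S_{21}(t)U_2(t)\Psi\|$ and of the integrand $\|(S_{21}(s)H_2-H_1(s)S_{21}(s))U_2(s)\Psi\|$ for $s\in[0,t]$. All factors of $1/\omega$ in the final bound will come from the prefactor $\sin(s\omega)/\omega$ already isolated in Lemma~\ref{lemmaaction}.

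For the boundary term I would use the explicit form of $S_{21}(s)$ directly. Writing $\phi=e_1\otimes\phi_1+e_2\otimes\phi_2$, the vector $(e^{is\omega}\sigma_+\otimes a^\dagger+e^{-is\omega}\sigma_-\otimes a)\phi$ decomposes as $e^{is\omega}e_1\otimes a^\dagger\phi_2+e^{-is\omega}e_2\otimes a\phi_1$, whose two summands are orthogonal; together with $\|a^\dagger\phi_2\|^2=\langle\phi_2,(N+1)\phi_2\rangle$ and $\|a\phi_1\|^2=\langle\phi_1,N\phi_1\rangle$ this yields $\|S_{21}(s)\phi\|\le\frac{\lambda}{\omega}\|(N+1)^{1/2}\phi\|$. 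To propagate the bound through $U_2(s)$, I would observe that the total excitation operator $M:=\sigma_+\sigma_-\otimes I+I\otimes a^\dagger a$ commutes with $H_2$ (a short check using $[\sigma_z,\sigma_\pm]=\pm 2\sigma_\pm$ and $[a^\dagger a,a^{(\dagger)}]=\mp a^{(\dagger)}$), hence with $U_2(s)$. Since $N\le M\le N+1$, this delivers $\|(N+1)^{1/2}U_2(s)\Psi\|\le\|(M+1)^{1/2}\Psi\|\le\|(N+2)^{1/2}\Psi\|$, which is exactly the first summand in~(\ref{upperbound}).

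For the integrand I would rewrite
\begin{equation*}
S_{21}(s)H_2-H_1(s)S_{21}(s)=[S_{21}(s),H_2]-(H_1(s)-H_2)S_{21}(s),
\end{equation*}
and exploit the fact, already visible in the computation of $H_1$, that $H_1(s)-H_2=\lambda(e^{2is\omega}\sigma_+\otimes a^\dagger+e^{-2is\omega}\sigma_-\otimes a)$ shares the operator structure of $S_{21}(s)$. The commutator $[S_{21}(s),H_2]$ then splits into a $\Delta$-piece that again reproduces the shape of $S_{21}(s)$ and contributes $|\Delta|\|(N+2)^{1/2}\Psi\|$, and a $\lambda$-piece in which most cross-terms vanish because $\sigma_\pm^2=0$; what survives is a short list of operators of the form $\sigma_z\otimes(a^\dagger)^2$, $\sigma_z\otimes a^2$ and scalar multiples of $I\otimes N$, $I\otimes (N+1)$. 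Each such term is dominated by $\lambda\|((N+1)(N+2))^{1/2}\phi\|$; pushing this bound through $U_2(s)$ via the conservation of $M$ and collecting coefficients produces the factor $3\lambda\|((N+2)(N+3))^{1/2}\Psi\|$. Integrating in $s$ yields the $|t|$-dependent bracket in~(\ref{upperbound}).

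The limit~(\ref{approxim}) then follows in two steps. For $\Psi\in\C^2\otimes\SS(\R)$ the right-hand side of~(\ref{upperbound}) is $\lambda/\omega$ times a quantity uniformly bounded on any compact $t$-interval, so it vanishes as $\omega\to\infty$; to extend to general $\Psi\in\C^2\otimes L^2(\R)$, density of $\C^2\otimes\SS(\R)$ together with the unitarity of $e^{-itH}$ and $e^{-itH_{\textrm{RWA}}}$ permits a standard $3\varepsilon$ approximation argument. The main obstacle I expect is the explicit bookkeeping in the commutator step: identifying precisely which Pauli-times-ladder products survive after the cancellations induced by $\sigma_\pm^2=0$, dominating each surviving term by the correct moment of $N$, and tracking the numerical constants to reach exactly the prefactor $3$. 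A secondary but essential subtlety is that every manipulation is carried out on $\C^2\otimes\SS(\R)$, which is why the Schwartz-space invariance of Lemma~\ref{lemma:U2} and the conservation of $M$ under $H_2$ are indispensable.
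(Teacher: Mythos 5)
Your proposal is correct and follows essentially the same route as the paper's proof: the integration-by-parts identity of Lemma~\ref{eqn:intparts1}, the bound $\|S_{21}(s)\phi\|\le\frac{\lambda}{\omega}\|(N+1)^{1/2}\phi\|$ propagated through $U_2(s)$ via the conserved excitation number $\mathcal{N}=\sigma_+\sigma_-\otimes I+N$, the splitting of the integrand into $[S_{21}(s),H_2]+ (H_2-H_1(s))S_{21}(s)$ with the surviving quadratic ladder terms each dominated by $\|((N+1)(N+2))^{1/2}\cdot\|$, and a density argument for the limit. The bookkeeping you flag as the main obstacle is carried out explicitly in the paper (Eq.~\eqref{eq:Xt}) and confirms your accounting, including the prefactor $3$.
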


This theorem proves~\eqref{eq:main}. In particular,~\eqref{approxim} shows that mathematically the rotating wave approximation is correct in the limit $\omega\to\infty$. How appropriate the approximation is in practice with finite parameters can be computed in~\eqref{upperbound}, which provides a concrete upper bound on the norm difference of the time evolution of an initial state under the actual time evolution and under the rotating wave approximation.

\begin{proof} First we prove~\eqref{upperbound}. Let $\Psi \in \C^2 \otimes \SS(\R)$ and $t \in \R$. First of all we observe that
\begin{equation}
	\| (I\otimes a) \Psi\|^2 = \langle (I\otimes a) \Psi|(I\otimes a)\Psi\rangle=\langle\Psi|(I \otimes a^{\dagger}a)\Psi\rangle=\langle\Psi| N \Psi\rangle = \| N^{1/2} \Psi\|^2,
\end{equation}
\begin{equation}
	\| (I\otimes a^{\dagger}) \Psi\|^2 =\langle\Psi| (N+1) \Psi\rangle = \| (N+1)^{1/2} \Psi\|^2.
\end{equation}
Moreover, the conservation law
\begin{equation}
	[H_2,\mathcal{N}]=0, \qquad 
	\mathcal{N}=P_{+} + N, \qquad P_{+}= \sigma_+\sigma_-\otimes I, \qquad  
\end{equation}
implies that
\begin{eqnarray}
	U_2(t)^{\dagger} N U_2(t) &=& U_2(t)^{\dagger}\mathcal{N} U_2(t) - U_2(t)^{\dagger}P_{+}U_2(t) = \mathcal{N}  - U_2(t)^{\dagger}P_{+}U_2(t)	\nonumber\\
	&=& N + \bigl(P_{+} - U_2(t)^{\dagger}P_{+}U_2(t)\bigr) \label{eqn:consN} \\
	& \leq&  N+1 \nonumber
\end{eqnarray}
on $\C^2 \otimes \SS(\R)$.

Start from the equality
	\begin{equation}
\bigl(U_{2}(t)-U_{1}(t)\bigr)\Psi =  -i S_{21}(t)U_{2}(t)\Psi+\int_{0}^{t}U_{1}(t)U_1(s)^{\dagger}\bigl(S_{21}(s)H_{2}-H_{1}(s)S_{21}(s)\bigr)U_{2}(s)\Psi\,ds \, .
\end{equation}
Let $u(t)=U_2(t)\Psi$. One gets
\begin{eqnarray}
	\|S_{21}(t)U_2(t)\Psi\|^2 
		&=&\langle S_{21}(t) u(t)  |  S_{21}(t) u(t) \rangle
	\nonumber\\
	&=& \left( \frac{\lambda\sin\left(t\omega\right)}{\omega}\right)^2
	\langle u(t)  | \bigl(\sigma_{+}\sigma_-\otimes a^{{\dagger}}a+ \sigma_{-}\sigma_+\otimes aa^{\dagger} \bigr
	)
	u(t) \rangle
	\nonumber\\
	&=& \left( \frac{\lambda\sin\left(t\omega\right)}{\omega}\right)^2
	\| \bigl(\sigma_{+}\sigma_{-}\otimes n^{1/2}+ \sigma_{-}\sigma_{+}\otimes (n+1)^{1/2} \bigr
	)
	u(t) \|^2
	\nonumber\\
	&\leq&  \frac{\lambda^2}{\omega^2} \| (N +1)^{1/2} u(t)\|^2
	\nonumber\\
	&\leq&  \frac{\lambda^2}{\omega^2}
	\| (N +2)^{1/2} \Psi\|^2 . \label{estact}
	\end{eqnarray} 
Moreover, let
\begin{equation}
	V(t)= H_2 - H_1(t) = -\lambda\left(e^{2it\omega}\sigma_{+}\otimes a^{{\dagger}}+e^{-2it\omega}\sigma_{-}\otimes a\right).
\end{equation}
Then
\begin{eqnarray}
	X(t) &=& S_{21}(t) H_2 - H_1(t)S_{21}(t) 
	\nonumber\\
	&=& [S_{21}(t), H_2] + V(t)S_{21}(t)
	\nonumber\\
	&=&
	-\frac{\lambda\sin\left(t\omega\right)}{\omega}\biggl[\Delta\left(-e^{it\omega}\sigma_{+}\otimes a^{{\dagger}}+e^{-it\omega}\sigma_{-}\otimes a\right)
	\nonumber\\
	& & \qquad\qquad\quad\;\; + \lambda \sigma_+\sigma_{-} \otimes \left(  e^{it\omega} a^{{\dagger}2} - e^{-it\omega} a^2 - e^{it\omega} a^{{\dagger}}a\right)
	\nonumber\\
	& & \qquad\qquad\quad\;\;+ \lambda \sigma_{-}\sigma_{+} \otimes \left( - e^{it\omega} a^{{\dagger}2} + e^{-it\omega} a^2 - e^{-it\omega} a a^{{\dagger}}\right)
	\biggr]. \label{eq:Xt}
\end{eqnarray}
We want to estimate $\|X(t)u(t)\|$. First we observe that
\begin{equation}
	\|\left(-e^{it\omega}\sigma_{+}\otimes a^{{\dagger}}+e^{-it\omega}\sigma_{-}\otimes a\right) u(t)\| 
		\leq \|(N+1)^{1/2} u(t)\| .
\end{equation}
Moreover for all $\psi \in \SS(\R)$:
\begin{eqnarray}
	\|\left(  e^{it\omega} a^{{\dagger}2} - e^{-it\omega} a^2 - e^{it\omega} a^{{\dagger}}a\right)\psi\| 
	&\leq& \|a^{{\dagger}2}\psi\|+ \|a^{2}\psi\| + \|a^{{\dagger}}a\psi\|
	\nonumber\\
	&\leq& 3\|((a^{\dagger}a+1)(a^{\dagger}a+2))^{1/2}\psi\|
	\end{eqnarray}
and
\begin{eqnarray}
	\|\left( - e^{it\omega} a^{{\dagger}2} + e^{-it\omega} a^2 - e^{-it\omega} a a^{{\dagger}}\right)\psi\| 
	&\leq& \|a^{{\dagger}2}\psi\|+ \|a^{2}\psi\| + \|aa^{{\dagger}}a\psi\|\nonumber
	\\
	&\leq& 3\|((a^{\dagger}a+1)(a^{\dagger}a+2))^{1/2}\psi\|.
\end{eqnarray}
Therefore,
\begin{eqnarray}
	\|X(t) u(t)\|&\leq& \frac{\lambda}{\omega} \Bigl[ |\Delta|\|(N+1)^{1/2} u(t)\| + 3 \lambda \|((N+1)(N+2))^{1/2}u(t)\|\Bigr]
	\nonumber\\
	&\leq& \frac{\lambda}{\omega} \Bigl[ |\Delta|\|(N+2)^{1/2} \Psi\| + 3 \lambda \|((N+2)(N+3))^{1/2}\Psi\|\Bigr]. \label{boundXt}
\end{eqnarray}
Taking things together we  get 
\begin{equation}
\left\Vert (U_{2}(t)-U_{1}(t))\Psi\right\Vert  \le 	 \frac{\lambda}{\omega} \biggl[
	\| (N +2)^{1/2} \Psi\| 
	+| t| \Bigl(| \Delta|\|(N+2)^{1/2} \Psi\| + 3 \lambda \|\bigl((N+2)(N+3)\bigr)^{1/2}\Psi\|\Bigr)	\biggr].
\end{equation}
Therefore
\begin{equation}
\lim_{\omega \to + \infty} \left\|  (e^{-it H_{\textrm{RWA}}}-e^{-it H})\Psi \right\| =
\lim_{\omega \to + \infty} \left\| (U_2(t)-U_1(t))\Psi \right\| =
0,
\end{equation}
for all $\Psi\in\C^2 \otimes \SS(\R)$, and since the latter is dense in $\C^2 \otimes L^2(\R)$,~\eqref{approxim} follows.

\end{proof}

\subsection{Lower bound}

\begin{theorem}
For all $\Psi \in \C^2 \otimes \SS(\R)$ and for all $0 \leq t \leq \pi/\omega$:
\begin{eqnarray}
\|(U_{2}(t) - U_{1}(t))\Psi \|  &\geq  & \frac{\lambda}{\omega} \sin(t\omega) \|(N-1)_{+}^{1/2}\Psi \|  \nonumber \\
                                              && - \frac{\lambda}{\omega^2}(1-\cos(t\omega))
\Bigl( |\Delta|\|(N+2)^{1/2} \Psi\| + 3 \lambda \|\bigl((N+2)(N+3)\bigr)^{1/2}\Psi\|\Bigr), \nonumber 
\end{eqnarray}
where $(N-1)_{+}$ denotes the positive part of the operator $N-1$. 
\end{theorem}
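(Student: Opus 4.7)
The plan is to return to the integration-by-parts identity of Lemma~\ref{eqn:intparts1},
\begin{equation*}
i(U_{2}(t)-U_{1}(t))\Psi = S_{21}(t)U_{2}(t)\Psi + i\int_{0}^{t}U_{1}(t)U_{1}(s)^{\dagger}X(s)U_{2}(s)\Psi\,ds,
\end{equation*}
with $X(s)=S_{21}(s)H_{2}-H_{1}(s)S_{21}(s)$, and apply the reverse triangle inequality $\|A+B\|\ge\|A\|-\|B\|$. The task then splits into (i) lower-bounding $\|S_{21}(t)U_{2}(t)\Psi\|$ in terms of $\Psi$ itself, and (ii) upper-bounding the integral remainder. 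In contrast to the upper bound, (i) now calls for an inequality in the opposite direction, and (ii) must preserve the $\sin(s\omega)$ prefactor instead of discarding it.

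For (i), I would compute $S_{21}(t)^{\dagger}S_{21}(t)$ directly: since $\sigma_{\pm}^{2}=0$ all cross terms vanish and one obtains
\begin{equation*}
S_{21}(t)^{\dagger}S_{21}(t)=\left(\frac{\lambda\sin(t\omega)}{\omega}\right)^{2}\bigl(N+P_{-}\bigr),\qquad P_{-}=\sigma_{-}\sigma_{+}\otimes I.
\end{equation*}
The delicate point is to convert the expectation on $u(t)=U_{2}(t)\Psi$ into one on $\Psi$ with enough sharpness to produce the positive part $(N-1)_{+}$, rather than just $N-1$ (which can be negative). For this I would exploit that $H_{2}$, hence $U_{2}(t)$, preserves each two-dimensional subspace $V_{n}=\mathrm{span}\{e_{1}\otimes\varphi_{n},\,e_{2}\otimes\varphi_{n+1}\}$ together with the one-dimensional $V_{-1}=\mathrm{span}\{e_{2}\otimes\varphi_{0}\}$, a fact visible directly from the action of $\sigma_{\pm}\otimes a^{(\dagger)}$ on Fock states. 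On $V_{n}$ the diagonal operator $N+P_{-}$ has eigenvalues $n$ and $n+2$, so $N+P_{-}\ge n$, while $(N-1)_{+}$ has eigenvalues $(n-1)_{+}$ and $n$, so $(N-1)_{+}\le n$. Using the unitarity of $U_{2}(t)$ on each block together with the orthogonality of the $V_{n}$ yields
\begin{equation*}
\langle u(t)\,|\,(N+P_{-})u(t)\rangle\;\ge\;\sum_{n}n\,\|\Psi|_{V_{n}}\|^{2}\;\ge\;\langle\Psi\,|\,(N-1)_{+}\Psi\rangle,
\end{equation*}
and since $\sin(t\omega)\ge 0$ on $[0,\pi/\omega]$ one deduces $\|S_{21}(t)U_{2}(t)\Psi\|\ge(\lambda\sin(t\omega)/\omega)\,\|(N-1)_{+}^{1/2}\Psi\|$.

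For (ii), I would revisit the bound on $\|X(s)u(s)\|$ in \eqref{eq:Xt}--\eqref{boundXt}, this time keeping the explicit prefactor $\sin(s\omega)$ present in $X(s)$ instead of absorbing it into a constant. The same Heisenberg-picture estimates $\|(N+1)^{1/2}u(s)\|\le\|(N+2)^{1/2}\Psi\|$ and $\|((N+1)(N+2))^{1/2}u(s)\|\le\|((N+2)(N+3))^{1/2}\Psi\|$ (based on the conservation law for $\mathcal{N}=P_{+}+N$) then give, for $s\in[0,\pi/\omega]$,
\begin{equation*}
\|X(s)u(s)\|\le \frac{\lambda\sin(s\omega)}{\omega}\Bigl[|\Delta|\,\|(N+2)^{1/2}\Psi\|+3\lambda\,\|((N+2)(N+3))^{1/2}\Psi\|\Bigr],
\end{equation*}
and integrating $\int_{0}^{t}\sin(s\omega)\,ds=(1-\cos(t\omega))/\omega$ over $[0,t]\subset[0,\pi/\omega]$ produces exactly the second, subtracted, term in the claimed inequality.

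The step I expect to be the main obstacle is the sharpness argument in (i): a crude operator bound like $N+P_{-}\ge N-1$ would leave $(N-1)$ on the right-hand side and destroy the estimate for $\Psi$ supported on low photon numbers, so the invariant-block structure of $H_{\textrm{RWA}}$ is genuinely needed to extract the positive part $(N-1)_{+}$. Once (i) and (ii) are in place, assembling them via the reverse triangle inequality gives the theorem on the interval $[0,\pi/\omega]$.
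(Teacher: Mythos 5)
Your proposal follows essentially the same route as the paper's proof: the integration-by-parts identity of Lemma~\ref{eqn:intparts1}, the reverse triangle inequality, the identity $S_{21}(t)^{\dagger}S_{21}(t)=(\lambda\sin(t\omega)/\omega)^{2}(\sigma_{+}\sigma_{-}\otimes a^{\dagger}a+\sigma_{-}\sigma_{+}\otimes aa^{\dagger})$ combined with the conservation law for $\mathcal{N}$, and the retention of the $\sin(s\omega)$ factor inside the integral so that $\int_0^t\sin(s\omega)\,ds=(1-\cos(t\omega))/\omega$. Your invariant-block argument on the subspaces $V_{n}$ is a correct and welcome elaboration of the step the paper dismisses with ``one can show that'' when passing from $U_{2}(t)^{\dagger}NU_{2}(t)\ge N-1$ to the positive part $(N-1)_{+}$.
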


This theorem reveals the limitation of the rotating wave approximation in practice: the error grows with the photon number, so for larger systems the rotating wave approximation may no longer be justified.
\begin{proof}
Start from the equality
	\begin{equation}
\bigl(U_{2}(t)-U_{1}(t)\bigr)\Psi =  -i S_{21}(t)U_{2}(t)\Psi+\int_{0}^{t}U_{1}(t)U_1(s)^{\dagger}\bigl(S_{21}(s)H_{2}-H_{1}(s)S_{21}(s)\bigr)U_{2}(s)\Psi\,ds \, .
\end{equation}
We have that
\begin{eqnarray}
 \|(U_{2}(t) - U_{1}(t))\Psi\|  &\ge & \|S_{21}(t)\Psi \| - \int_{0}^{t} \|(S_{21}(s)H_{2}-H_{1}(s)S_{21}(s))U_{2}(s)\Psi\| \,ds \nonumber \\
\end{eqnarray}
We define $u(t)=U_2(t)\Psi$ and we have
\begin{eqnarray}
\left\Vert S_{21}(t)U_2(t)\Psi\right\Vert ^{2} & =& \left( \frac{\lambda\sin\left(t\omega\right)}{\omega}\right)^2
\langle u(t)  | \bigl(\sigma_{+}\sigma_-\otimes a^{{\dagger}}a+ \sigma_{-}\sigma_+\otimes aa^{{\dagger}} \bigr) u(t) \rangle\nonumber\\
&\ge& \left( \frac{\lambda\sin\left(t\omega\right)}{\omega}\right)^2 \|N^{1/2}u(t)\|^2,
\end{eqnarray}
moreover, by~(\ref{eqn:consN}), we have
\begin{equation}
U_2(t)^{\dagger} N U_2(t) \ge N-1.
\end{equation}
Hence, one can show that
\begin{equation}
\left\Vert S_{21}(t)U_2(t)\Psi\right\Vert ^{2} \ge \left( \frac{\lambda\sin\left(t\omega\right)}{\omega}\right)^2 \|(N-1)_{+}^{1/2}\Psi\|^2.
\end{equation}
Moreover, for $0\le t\le \pi/\omega$ we have 
\begin{eqnarray}
	&& \int_{0}^{t} \|(S_{21}(s)H_{2}-H_{1}(s)S_{21}(s))U_{2}(s)\Psi\| \,ds  \nonumber \\
	&&  \leq  \frac{\lambda}{\omega}\Bigl[ |\Delta|\|(N+2)^{1/2} \Psi\| + 3 \lambda \|((N+2)(N+3))^{1/2}\Psi\|\Bigr] \int_{0}^t \sin(\omega s)\, ds \nonumber \\
	&& =  \frac{\lambda(1-\cos(\omega t))}{\omega^2}\Bigl[ |\Delta|\|(N+2)^{1/2} \Psi\| + 3 \lambda \|((N+2)(N+3))^{1/2}\Psi\|\Bigr]
\end{eqnarray}
and hence, for $0\le t\le \pi/\omega$, we get
\begin{eqnarray}
 \|(U_{2}(t) - U_{1}(t))\Psi\|  &\ge & \|S_{21}(t)\Psi \| - \int_{0}^{t} \|(S_{21}(s)H_{2}-H_{1}(s)S_{21}(s))U_{2}(s)\Psi\| \,ds \nonumber \\
                                           &\ge & \frac{\lambda}{\omega} \sin(t\omega)\|(N-1)_{+}^{1/2}\Psi\|   \\
                                           && - \frac{\lambda}{\omega^2}(1-\cos(t\omega))
\Bigl( |\Delta|\|(N+2)^{1/2} \Psi\| + 3 \lambda \|\bigl((N+2)(N+3)\bigr)^{1/2}\Psi\|\Bigr).\nonumber
\end{eqnarray}
\end{proof}

\subsection{Applying the bounds to Fock states}
To understand the scaling better, let us apply the above bounds to (normalised) Fock states $\Phi_{j,n}= e_j \otimes \varphi_n \in \C^2 \otimes \SS(\R)$, with $j \in \{1,2\}$ and $n\in \N$, $n>0$.  For simplicity, we consider the case  $\Delta=0$, but the argument is easily generalised. From the upper bound~\eqref{upperbound} we obtain
\begin{equation}
\left\Vert (U_{2}(t)-U_{1}(t))\Phi_{j,n}\right\Vert  \le 	 \frac{\lambda (n +2)^{1/2}}{\omega} \biggl[
	 1 
	+ 3|t| \lambda  (n+3)^{1/2}	\biggr].
\end{equation}
For the lower bound we have, for $0\le t\le \pi/\omega$,
\begin{align}\label{lowerbound}
\|(U_{2}(t) -& U_{1}(t))\Phi_{j,n}\|  \ge \frac{\lambda}{\omega} \sin(t\omega)(n-1)^{1/2}  - \frac{3\lambda ^2}{\omega^2}(1-\cos(t\omega))
 \bigl((n+2)(n+3)\bigr)^{1/2}.
\end{align}
We compute that this as a function of $t$ has a maximum at
\begin{equation}\label{time}
t_*=\frac{\cos ^{-1}\left(\frac{3 \lambda }{\sqrt{9 \lambda ^2+\frac{(n-1) \omega ^2}{(n+2) (n+3)}}}\right)}{\omega }
\end{equation}
At this time, the right hand side of Eq.~\eqref{lowerbound} evaluates as
\begin{equation}\label{loweratto}
	\frac{g \left(9 g^2 (n+2) (n+3)-3 g \sqrt{(n+2) (n+3) \left(9 g^2 (n+2) (n+3)+n-1\right)}+n-1\right)}{\sqrt{9 g^2 (n+2) (n+3)+n-1}}.
\end{equation}
Here, we have set $g\equiv\frac{\lambda}{\omega}.$

We can expand this in order of $n^{-1}$ to obtain a slightly simpler exact lower bound (valid for $n>0$)
\begin{align}\label{lowerbound1}
\sup_{t\in [0,\frac{\pi }{\omega}]}\|(U_{2}(t) -& U_{1}(t))\Phi_{j,n}\|  \ge \frac{1}{6}-\frac{1}{216 g^2 n}-\frac{7}{12 n} .
\end{align}
Focussing on the same time interval also for the upper bound and linearising it, we can conclude that\begin{align}\label{lowerbound4}
5g\sqrt{n +3} \ge \sup_{t\in [0,\frac{\pi }{\omega}]}\|(U_{2}(t) -& U_{1}(t))\Phi_{j,n}\|  \ge \frac{1}{6}-\frac{1}{216 g^2 n}-\frac{7}{12 n},
\end{align}
proving~\eqref{lowerbound2}. This bound is not necessarily a sharp bound but it shows nicely that the short-time error becomes small for small $g$ and large for high photon number $n$, hence it provides us with a quantitative condition on $g$ in order to reduce the error below a certain bound for given photon number. For high photon numbers $n\to\infty$, there is a time such that the difference becomes greater than $\frac16$, i.e.,
\begin{equation}
\|e^{-itH}-e^{-itH_{\textrm{RWA}}}\|\ge \frac{1}{6}
\end{equation}
by taking the supremum over all states in~\eqref{lowerbound1}, which means that the rotating wave approximation does not work for arbitrarily high photon numbers; this proves~\eqref{lb}.

\end{document}